\renewcommand{\[}{\begin{equation}}
\renewcommand{\]}{\end{equation}}
\newcommand{\pd}{\partial}
\newcommand{\R}{\mathbb{R}}
\newcommand{\C}{\mathbb{C}}
\renewcommand{\AA}{\mathcal{A}}
\newcommand{\ba}{\mathbf{a}}
\newcommand{\bA}{\mathbf{A}}
\newcommand{\BB}{\mathcal{B}}
\newcommand{\re}{\mathrm{e}}
\newcommand{\HH}{\mathcal{H}}
\newcommand{\rmi}{\mathrm{i}}
\newcommand{\bK}{\mathbf{K}}
\newcommand{\bp}{\mathbf{p}}
\newcommand{\rR}{\mathcal{R}}
\newcommand{\bx}{\mathbf{x}}
\newcommand{\ve}{\varepsilon}
\newcommand{\sign}{\mathop{\rm sign}}
\newcommand{\mN}{\mathbb{N}}
\renewcommand{\[}{\begin{equation}}
\renewcommand{\]}{\end{equation}}
\renewcommand{\AA}{\mathcal{A}}
\newtheorem{remark}{Remark}
\begin{document}

\title{Dynamics of an electron confined to a ``hybrid plane'' and interacting with a magnetic field}

\author{Raffaele Carlone\\Universit\`a dell'Insubria, via Valleggio, 22100 Como, Italy\\
e-mail:  raffaele.carlone@me.com \\[2ex] Pavel Exner\\
Nuclear Physics Institute, Czech Academy of Sciences, 25068 \v{R}e\v{z} near Prague, \\
and Doppler Institute, Czech Technical University,
B\v{r}ehov\'{a}~7, 11519~Prague, Czechia\\e-mail: exner@ujf.cas.cz
}

\maketitle

\begin{abstract}
We discuss spectral and resonance properties of a Hamiltonian
describing motion of an electron moving on a ``hybrid surface'' consisting on a halfline attached by its endpoint to a plane
under influence of a constant magnetic field which interacts with its spin through a Rashba-type term.
\end{abstract}

\noindent\emph{Keywords:} quantum transport, contact interaction, spin-orbit coupling

\setcounter{equation}{0}
\section{Introduction}

\noindent The study of quantum particles confined to manifolds of a mixed dimensionality has a long history starting from \cite{ES87}. Recently this problem  attracted a new interest connected with possible influence of external fields and/or internal degrees of freedom. In this paper we continue this line of research and study the dynamics of a charged particle with spin $\frac12$, having in mind an electron, which moves on a ``hybrid surface'' consisting of a halfline attached by its endpoint to a plane. In the plane the electron interacts with a constant magnetic field with orientation perpendicular to it, and with its own spin via a spin-orbit interaction --- in the present work we suppose that the latter is of the form due to Rashba.

The coupling at the contact point between the halfline and the
plane can be chosen in different ways to make the resulting
Hamiltonian self-adjoint. In this sense it may include a point
interaction; such a motion in the plane alone was analyzed in
\cite{AP05}; the hybrid plane without the magnetic field was
discussed \cite{ES07}.

The construction of the Hamiltonian follows the usual pattern using the theory of self-adjoint extensions. As a starting information we use free evolution on the disconnected parts of the configuration space. On the one hand it is the halfline equipped with the Laplacian with Neumann boundary condition. This is, of course, simple. More complicated is the other part describing a particle with spin moving in the plane under the effect of a constant magnetic field and Rashba interaction --- here we can use the results of the papers \cite{BGP07} and \cite{BGP007} which we briefly recall in the next section.

After this preliminary we construct in Section~3 Hamiltonians
describing the coupled system; we write the appropriate
generalized boundary conditions and derive the corresponding Green function. In Section~4 we study properties of these Hamiltonians for a spin-independent coupling. We analyze scattering of a particle travelling on the halfline finding the corresponding reflection amplitude. Furthermore, we find the spectrum of the Hamiltonian as well as the resonances the origin of which are the Landau levels in the decoupled plane.

\setcounter{equation}{0}
\section{A preliminary: motion in the plane}

In this section we collect some results about the motion in the plane needed in the following; proofs and details can be found in~\cite{BGP07}. We consider a charged two-dimensional particle\footnote{To hedge us against a terminological objection we hasten to add that we have in mind a two-dimensional \emph{model} of a real-world particle which proved to be extremely efficient, in particular, as a description of a rare electron gas confined to a very thin layer.} of spin $\frac12$ under the influence of a uniform magnetic field of intensity $B$ orthogonal to the plane --- for definiteness we suppose that $B>0$. Let $\bA$ be the corresponding magnetic vector potential, $B=\frac{\partial A_y}{\partial x} -\frac{\partial A_x}{\partial y}$; we employ the symmetric gauge putting $A(x,y)= \left(\frac12{By}, -\frac12{Bx}\right)$. The Hilbert state space is
\begin{equation*}
\HH_\mathrm{plane}=L^2(\R^2,\C^2)
\end{equation*}
If we take into account the Rashba spin-orbit interaction, the
motion of the particle in the plane is described by the following Hamiltonian
\[
\Hat H_{R}=\Hat H_0+\displaystyle\frac{\alpha_{R}}{\hbar}\,
\Hat U_{R}+\displaystyle\frac{g_*}{2}\,\mu_B B\sigma_z
\label{HamRas}\]
where
\begin{itemize}
\item $\Hat H_0=\displaystyle\frac{1}{2m_*}\bm{\Pi}^2\sigma_0$
with $\Pi_j:=-i\hbar \pd_j-\frac{e}{c}\,A_j$,
$\:j=x,y$, where $\sigma_0$ is the $2\times2$ identity matrix,
\item $\Hat U_{R}=\sigma_x\Pi_y-\sigma_y \Pi_x
$, with the standard notation for the  Pauli matrices,
$\sigma_x=\left(
\begin{array}{cc} 0&1\\
                1&0\\
\end{array}
\right)$, $\sigma_y=\left(
\begin{array}{cc} 0&-\rmi\\
                \rmi&0\\
\end{array}
\right)$, $\sigma_z=\left(
\begin{array}{cc} 1&0\\
                0&-1\\
\end{array}
\right)$,
\end{itemize}
and $\mu_B\equiv \displaystyle\frac{|e|\hbar}{2m_ec}$ is the Bohr magneton. Furthermore, $m_e$ and $m^*$ are the electron mass and its effective mass, respectively, $g_*$ is the effective $g$-factor, and $\alpha_{R}$ is the real-valued Rashba constant.

The values of the units are not important in the following hence we shall use mostly dimensionless coordinates introducing the following notation: the coupling $\varkappa_{R}:=
\displaystyle\frac{m_*\alpha_{R}}{\hbar^2}$,  the magnetic flux quantum $\Phi_0:=\displaystyle\frac{2\pi \hbar c}{e}$,
furthermore, $b:=\displaystyle\frac{2\pi}{\Phi_0}B$ and
$\ba:=\displaystyle{\frac{2\pi} {\Phi_0}}\,\bA$ or
$\ba=\left(\frac12{by},-\frac12{bx}\right)$. Setting now
$\gamma:=-\displaystyle\frac{g_*}{2}\,\displaystyle\frac{m_*}{m_e}$
we can rewrite the above Hamiltonian as
\begin{eqnarray*}
&H_{R}=K^2\,\sigma_0+2\varkappa_{R} U_{R}+\gamma b \sigma_z
\end{eqnarray*}
where $\bK:=\frac{1}{\hbar}\,\bp-\ba\,$ and $\,U_{R}=\sigma_x K_y-\sigma_y K_x$.

As usual the properties of such a Hamiltonian are encoded in its resolvent. The latter is known explicitly and setting
\[\beta_{R}:=\frac{\gamma+1}{2\varkappa_{R}}b,\quad
\eta_{R}:= \sqrt{z+\varkappa_{R}^2+\beta_{R}^2}\,,\quad
\zeta^{\pm}_{R}(b):=(\eta_{R}\pm\varkappa_{R})^2+ b -\beta^2_{R}
\]
it is possible to write an explicit formula for the Green function components
\[\label{greensimp}
G_{R}(\bx,\bx';z)=\left(\begin{array}{cc}
G^{11}_{R}(\bx,\bx';z) & G^{12}_{R}(\bx,\bx';z)\\
G^{21}_{R}(\bx,\bx';z) & G^{22}_{R}(\bx,\bx';z)
\end{array}\right)
\]
where $\bx\equiv(x,y)$, $\bx'\equiv(x',y')$, and
\begin{eqnarray*}
G^{11}_{R}(\bx,\bx';z)= \frac{\beta_{R}-\varkappa_{R}}{2\eta_{R}}
\Big(G_0\big(\bx,\bx';\zeta_{R}^-(b)\big)-
G_0\big(\bx,\bx';\zeta^+_{R}(b)\big)\Big)\\+
 \frac{1}{2}\Big(
G_0\big(\bx,\bx';\zeta_{R}^-( b)\big)+
G_0\big(\bx,\bx';\zeta^+_{R}(b)\big)\Big),\\
G^{22}_{R}(\bx,\bx';z)= -\frac{\beta_{R}+\varkappa_{R}}{2\eta_{R}}
\Big(G_0\big(\bx,\bx';\zeta_{R}^-(-b)\big)-
G_0\big(\bx,\bx';\zeta^+_{R}(-b)\big)\Big) \\+ \frac{1}{2}\Big(
G_0\big(\bx,\bx';\zeta_{R}^-( -b)\big)+
G_0\big(\bx,\bx';\zeta^+_{R}(-b)\big)\Big)\,,
\end{eqnarray*}
while the off-diagonal elements are
\begin{eqnarray*}
\lefteqn{G^{12}_{R}(\bx,\bx';z)}
\\ && =|b|\,\Big((x-x')-\rmi(y-y')\Big)\,\bigg(\frac{\sign b -1}{2}\,
\Big[G_0\big(\bx,\bx';\zeta_{R}^-(-b)\big) \\ && -
G_0\big(\bx,\bx';\zeta^+_{R}(-b)\big)\Big]
+\Big[F_0\big(\bx,\bx';\zeta_{R}^-(-b)\big)-
F_0\big(\bx,\bx';\zeta^+_{R}(-b)\big)\Big] \bigg)
\end{eqnarray*}
and $G^{21}_{R}(\bx,\bx';z)=\overline{G^{12}_{R}(\bx',\bx;\bar
z)}$. Here we have used the notation
\begin{eqnarray*}
\lefteqn{ F_0(\bx,\bx';z):=\displaystyle
\frac{1}{4\pi}\,\Big(\frac{z}{2|b|}-\frac{1}{2}\Big)
\Gamma\big(\frac{1}{2}-\frac{z}{2|b|}\big)}\\ &&\times
\exp\Big(\frac{\rmi
b}{2}(\bx\wedge\bx')-\frac{|b|}{4}(\bx-\bx')^2\Big)\,\Psi
\Big(\frac{3}{2}-\frac{z}{2|b|}\,,2;\frac{|b|}{2}(\bx-\bx')^2\Big).
\end{eqnarray*}
and
\begin{eqnarray*}
\lefteqn{G_0(\bx,\bx';z):=\frac{1}{4\pi}\,
\Gamma\big(\frac{1}{2}-\frac{z}{2|b|}\big)}\\ && \times
\exp\Big(\frac{\rmi
b}{2}(\bx\wedge\bx')-\frac{|b|}{4}(\bx-\bx')^2\Big)\,\Psi
\Big(\frac{1}{2}-\frac{z}{2|b|}\,,1;\frac{|b|}{2}(\bx-\bx')^2\Big),
\end{eqnarray*}
where $\Psi$ is the confluent hypergeometric function and $\Gamma$ is the Euler gamma function. As usual in these situations it is necessary to know also the renormalized Green function with the diagonal singularity removed given by
\begin{equation*}
G^{\mathrm{ren}}_R(z):=\lim_{\bx'\to\bx}\Big[
G_R(\bx,\bx';z)-S(\bx,\bx') \Big]\end{equation*}
where
$S(\bx,\bx';z):=-\displaystyle\frac{1}{2\pi}\log|\bx-\bx'|\sigma_0$.
If we put
\begin{equation}\label{qdef}
U(z):=\lim_{\bx\to \bx'}\Big(G_0(\bx,\bx';z)+\frac{1}{2\pi}\log |\bx-\bx'|\Big)=
-\frac{1}{4\pi}\Big(\psi\big(\frac{1}{2}-\frac{z}{2|b|}\big)-2\psi(1)+\log\frac{|b|}{2}\Big),
\end{equation}
then
\begin{eqnarray*}
G_R^{\mathrm{ren}}(z) &\!=\!& \left(\begin{array}{cc}G_{R,1}^{\mathrm{ren}}(z)&0\\
0&G_{R,2}^{\mathrm{ren}}(z)\end{array}\right)\\
&\!=\!&\frac{\beta_{R}\sigma_z-\varkappa_{R}} {2\eta_{R}}\,
\left(\begin{array}{cc} U\big(\zeta_{R}^-(b)\big)-
U\big(\zeta^+_{R}(b)\big)&0\\
0&U\big(\zeta_{R}^-(-b)\big)-
U\big(\zeta^+_{R}(-b)\big)\end{array}\right)\\
&&+\frac{1}{2}\left(\begin{array}{cc} U(\zeta_{R}^-( b))+
U(\zeta^+_{R}(b))&0\\
0&U(\zeta_{R}^-(-b))+
U\big(\zeta^+_{R}(-b)\big)\\
\end{array}\right)\,,
\end{eqnarray*}
where
\begin{equation}
\begin{split}\label{gg1}
G_{R,1}^{\mathrm{ren}}(z):=&-\frac{1}{4\pi}\left(\frac{1}{2}
+\frac{\beta_{R}-\varkappa_r}{2\sqrt{z+\beta_{R}^{2}+\varkappa_r^{2}}}\right)\psi\left(\frac{\beta_{R}^{2}
-(\sqrt{z+\beta_{R}^{2}+\varkappa_r^{2}}-\varkappa_r)^{2}}{2b}\right)\\
&-\frac{1}{4\pi}\left(\frac{1}{2}-\frac{\beta_{R}-\varkappa_r}{2\sqrt{z+\beta_{R}^{2}
+\varkappa_r^{2}}}\right)\psi\left(\frac{\beta_{R}^{2}-(\sqrt{z+\beta_{R}^{2}+\varkappa_r^{2}}+\varkappa_r)^{2}}{2b}\right)\\
&+\frac{1}{4\pi}\left(2\psi(1)-\log\left(\frac{b}{2}\right)\right)
\end{split}
\end{equation}
and
\begin{equation}
\begin{split}\label{g2}
G_{R,2}^{\mathrm{ren}}(z):=
&-\frac{1}{4\pi}\left(\frac{1}{2}-\frac{\beta_{R}+\varkappa_r}{2\sqrt{z+\beta_{R}^{2}
+\varkappa_r^{2}}}\right)\psi\left(\frac{2b+\beta_{R}^{2}-(\sqrt{z+\beta_{R}^{2}
+\varkappa_r^{2}}-\varkappa_r)^{2}}{2b}\right)\\
&-\frac{1}{4\pi}\left(\frac{1}{2}+\frac{\beta_{R}+\varkappa_r}{2\sqrt{z+\beta_{R}^{2}
+\varkappa_r^{2}}}\right)\psi\left(\frac{2b+\beta_{R}^{2}-(\sqrt{z+\beta_{R}^{2}+\varkappa_r^{2}}+\varkappa_r)^{2}}{2b}\right)\\
&+\frac{1}{4\pi}\left(2\psi(1)-\log\left(\frac{b}{2}\right)\right)
\end{split}
\end{equation}
The symbol $\psi(z)$ here means the digamma function, known to be a meromorphic function of $z$ with no branch cut discontinuities and with simple poles at $z=0,-1,-2,....$, defined by
\begin{equation*}
\psi(z)=-\gamma+(z-1)\sum_{n=0}^{\infty}\frac{1}{(n+1)(z+n)}\,.
\end{equation*}

The spectrum of the magnetic Rashba Hamiltonian has been derived in \cite{BGP07}. It consists of infinitely degenerate eigenvalues which are natural to call modified Landau levels,
\begin{equation}\begin{split}
&\sigma_p (H_{R})=\{\varepsilon^{\pm}(n,s):\,n\in\mN,\,s=\pm1\}\\
&\varepsilon^{\pm}(n,s)=|b|(2n+1- s\sign b)\pm
2\varkappa_{R}\sqrt{\beta_{R}^2+|b|\big(2n+1- s\sign b\big)}\,;
\end{split}
\end{equation}
one can recover this result by inspecting the resolvent
singularities in (\ref{greensimp}).

\setcounter{equation}{0}
\section{Motion in the hybrid plane}

After this preliminary we come to the problem described in the
introduction and suppose that the configuration space consists of a plane described above to which a halfline lead is attached; without loss of generality we place the junction to the origin of coordinates in the plane. The construction is analogous to the non-magnetic case \cite{ES07} the difference being in the plane Hamiltonian component, nevertheless, we describe it in sufficient detail in order to make the present paper self-contained.

\subsection{The halfline-plane coupling}

The lead Hilbert space is $\HH_\mathrm{lead}= L^2(\R_+,\C^2)$, and the whole state space of the system is the consequently the orthogonal sum $\HH:= \HH_\mathrm{lead} \oplus
\HH_\mathrm{plane}$. In other words, the wave functions are of the form $\Psi = \{ \psi_\mathrm{lead},
\psi_\mathrm{plane}\}^\mathrm{T}$ with each of the components
being a $2\times 1$ column. The construction starts from the
decoupled operator $H^0:=H_\mathrm{lead} \oplus H_R$ where the
first component is the Laplacian on the halfline $H_\mathrm{lead} \psi_\mathrm{lead} = -\psi''_\mathrm{lead}$ with Neumann boundary condition at the endpoint\footnote{There is no need to consider the effect of the magnetic field on the halfline even if the lead is not perpendicular to the plane because one can always remove it by a simple gauge transformation.}, while $H_R$ is the magnetic Rashba Hamiltonian discussed in the previous section. Following the method of \cite{ES87} we restrict the operator $H^0$ to functions which vanish in the vicinity of the junction obtaining thus a symmetric operator of deficiency indices $(4,4)$. In the second step we construct its self-adjoint extensions which are regarded as admissible Hamiltonians.

The extensions can be characterized in various ways. There is a general scheme coming from von Neumann theory, however, it is by far more practical to employ boundary conditions when dealing with a problem of the present type. The boundary values on the halfline are the simply columns $\psi_\mathrm{lead}(0+)$ and $\psi'_\mathrm{lead} (0+)$. On the other hand, in the plane we have to use generalized ones. The functions in the domain of the restriction have a logarithmic singularity at the origin and the generalized boundary values $L_j(\psi_\mathrm{plane}),\, \mbox{j=0,1,}\,$ appear as
coefficients in the corresponding expansion,
 \[ \label{gen_bv}
\psi_\mathrm{plane}(\bx) = -\frac{1}{2\pi}\,
L_0(\psi_\mathrm{plane})\,\ln |\bx| + L_1(\psi_\mathrm{plane}) +
o(|\bx|)\,.
 \]
 where
 \begin{equation*}
 \begin{split}
 &L_0(\psi_\mathrm{plane})=\lim_{|\bx|\to
 0}\frac{\psi_\mathrm{plane}(|\bx|)}{\ln{|\bx|}} \\[.5em]
 &L_1(\psi_\mathrm{plane})=\lim_{|\bx|\to 0}\left[\psi_\mathrm{plane}(|\bx|)
 -\ln(|\bx|)L_0(\psi_\mathrm{plane}(|\bx|))\right]
 \end{split}
 \end{equation*}
 Using these boundary values we can write the sought boundary conditions as
 \[ \label{bc}
\begin{array}{rcl}
\psi'_\mathrm{lead} (0+) &=& A\psi_\mathrm{lead} (0+) +
C^*L_0(\psi_\mathrm{plane})\,, \\ [.3em] L_1(\psi_\mathrm{plane})
&=& C\psi_\mathrm{lead} (0+) + DL_0(\psi_\mathrm{plane})\,,
\end{array}
 \]
where $A,D,C$ are $2\times 2$ matrices, the first two of them
Hermitian, so the matrix $\AA:= {A\;C^* \choose C\;D}$ depends of sixteen real parameters as the deficiency indices suggest. One can check easily that the corresponding boundary form vanishes under the condition (\ref{bc}), which means that each fixed $\AA$ gives rise to a self-adjoint extension $H_\AA$ of the restricted operator.

It is worth noting that the above boundary conditions are
generic but do not cover all the extensions leaving out cases
when the matrix $\AA$ is singular; this flaw can be mended in
the standard way \cite{KS, AP05} if one replaces
(\ref{bc}) by the symmetrized form of the relation,
 \[ \label{bc2}
\AA {\psi_\mathrm{lead}(0+) \choose L_0(\psi_\mathrm{plane})} +
\BB {\psi'_\mathrm{lead}(0+) \choose L_1(\psi_\mathrm{plane})}
=0\,,
 \]
where $\AA,\BB$ are matrices such that $(\AA|\BB)$ has rank four and $\AA\BB^*$ is Hermitean. We will restrict ourselves, however, to the generic case $\BB=-I$ expressed by (\ref{bc}) in the following; the same is true for the alternative form of the b.c. mentioned below.

The way in which the parameter matrix is chosen depends on
physical properties of the coupling between the lead and the
plane. In particular, diagonal $A,D,C$ correspond to the situation when the junction does not couple the spin states, and moreover, scalar matrices describe a spin-independent coupling. It is obvious that the lead and the plane are decoupled if $\AA$ is block-diagonal, i.e. $C=0$. A na\"{\i}ve interpretation of the conditions \eqref{bc} is that $C$ is responsible for the coupling while $A$ and $D$ are point perturbations at the two components of the configuration space, respectively.

\begin{remark}
{\rm An attentive reader may wonder about the relation between the boundary conditions and the geometry of the problem, that is, the angle between the halfline and the plane. Likewise, one may ask whether the coupling could be influenced by the magnetic field. The above analysis gives no answer to these questions; it only guarantees self-adjointness of the Hamiltonian, or in other words, conservation of the probability current through the junction. A natural approach would be to consider a ``fat'' hybrid plane in analogy with the analogous problem for networks -- cf.~\cite{EP09} and references therein -- and to analyze the limiting behavior as its thickness tends to zero. This question is open and by far not easy.}
\end{remark}

\subsection{The Green function}

As usual properties of an operator are encoded in its resolvent, hence our next task is to find the latter for the above constructed self-adjoint extensions. A suitable tool to do that is Krein's formula \cite{AGHH, BGMP02} which allows us to find the sought resolvent starting from Green's function of the decoupled system which has a block-diagonal form,
 \[ \label{freeG}
G^0(x,x';\bx,\bx';z)=\left(\begin{array}{cc}
G_\mathrm{lead}(x,x';z) & \mathbf{0}_{2} \\
\mathbf{0}_{2} & G_{R}(\bx,\bx';z) \end{array} \right)\,,
 \]
where $\mathbf{0}_{2}$ is the $2\times 2$  null matrix,
$G_{R}(\bx,\bx';z)$ is given by (\ref{greensimp}) and
 $$
 G_\mathrm{lead}(x,x';z)= \frac{\rmi}{\sqrt{z}}\, \cos
 \sqrt{z}x_<\: \re^{-\rmi\sqrt{z}x_>}\: \sigma_0
 $$
with the conventional notation, $x_<:=\min\{x,x'\},\:
x_>:=\max\{x,x'\}$, since we have assumed Neumann boundary
condition at the halfline endpoint. We introduce Krein's function $Q(z)$, which is an analytic $4\times4$-matrix valued function of the spectral parameter $z$, as the diagonal values of the kernel, with the above described renormalization in the planar part, specifically
 \[ \label{kreinf}
 Q(z):= \left(\begin{array}{cc}
 \frac{\rmi}{\sqrt{z}}\: \sigma_0 & \mathbf{ 0}_{2} \\
 \mathbf{0}_{2} & G^{\mathrm{ren}}_R(z) \end{array} \right)\,.
 \]
The full Green function is obtained by a finite-rank perturbation of the free one. It is convenient to rewrite the conditions \eqref{bc} using a modified basis in the boundary value space: instead of the vectors employed above we take
 $$
 \tilde\Gamma_1\psi:= {-\psi'_\mathrm{lead}(0+) \choose
 L_0(\psi_\mathrm{plane})}\,, \quad \tilde\Gamma_2\psi:=
 {\psi_\mathrm{lead}(0+) \choose L_1(\psi_\mathrm{plane})}\,.
 $$
One can check easily that they satisfy $\tilde\AA\tilde\Gamma_1
\psi + \tilde\BB\tilde\Gamma_2 \psi=0$ with $\tilde\BB=-I$ and
 \[ \label{tildeA}
 \tilde\AA := \left(\begin{array}{cc}
 -A^{-1} & -A^{-1}C^* \\ -CA^{-1} & D-CA^{-1}C^*
 \end{array} \right)\,.
 \]
It is clear that $\tilde\AA$ which equal up to the sign to
$\tilde\AA\tilde\BB^*$ is Hermitean. The reason for the
modification is that with the last boundary conditions our comparison operator $H^0$ is characterized by $\tilde\Gamma_1 \psi=0$, i.e. \mbox{$\tilde\AA^0=I$}, $\tilde\BB^0=0$. This allows us to use the result of \cite{AP05} directly (in our case it is nothing else than the usual Krein's formula) to infer that the resolvent kernel of $H_\AA$ is given by
 \begin{eqnarray} \label{krein}
 \lefteqn{G_\AA(x,x';\bx,\bx';z)= G^0(x,x';\bx,\bx';z)} \\ [.3em]
 && - G^0(x,0;\bx,\mathbf{0};z)\, \big[Q(z)-\tilde\AA\big]^{-1}
 G^0(0,x';\mathbf{0},\bx';z)\,; \nonumber
 \end{eqnarray}
the second term which is a rank sixteen operator represents the resolvent difference between the free and full Hamiltonian. It is important to note that even in the situation when the coupling is spin-independent, $\AA= {a\;\bar c \choose c\;d} \otimes \sigma_0$ and similarly for $\tilde\AA$, the Green function does not decompose; the reason is that the spin states are still coupled by the spin-orbit interaction in the plane.

\setcounter{equation}{0}
\section{Properties of $H_{\AA^{\ve}}$}

In this section we will investigate the coupling between the
halfline and the plane with a particular choice of the boundary conditions containing a parameter which allows us to control the coupling strength. Specifically, we employ the following choice of the matrices,
\begin{equation}\label{bb}
\AA^{\ve}= {\tilde{a}\;\varepsilon \choose \varepsilon
\;\tilde{d}} \otimes \sigma_0\,, \quad\varepsilon\ne 0 \quad
\mathrm{and} \quad \BB^{\ve}=- {1 \;0 \choose 0\;1} \otimes
\sigma_0\,,
\end{equation}
so the two parts of the configuration manifold are coupled and
there is no coupling between spin degrees of freedom at the
contact point. In the plane, of course, we have the spin orbit
interaction. In the decoupled case, $\varepsilon=0$, the constants $\tilde{a},\tilde{d}$ can be regarded as point interaction strengths at the halfline endpoint and in the plane, respectively, to which a physical interpretation can be given in analogy with \cite{ES97}. The off-diagonal terms in the matrix $\AA^{\ve}$ are responsible for the coupling between the halfline and the plane.

Having chosen the coupling our next task is to specify the
quantities appearing in the general formula (\ref{krein}); we get
\begin{equation*}
\left(Q(z)-\AA^{\ve}\right)^{-1}=\frac{1}{D_{\ve}(z)}\left(\begin{array}{cccc}
\Gamma^{\ve}_{1,1}(z)&0&\Gamma^{\ve}_{1,3}(z)&0\\
0&\Gamma^{\ve}_{2,2}(z)&0&\Gamma^{\ve}_{2,4}(z)\\
\Gamma^{\ve}_{3,1}(z)&0&\Gamma^{\ve}_{3,3}(z)&0\\
0&\Gamma^{\ve}_{4,2}(z)&0&\Gamma^{\ve}_{4,4}(z)
\end{array}\right)
\end{equation*}
with
\begin{eqnarray*}
D_{\ve}(z) &\!=\!&
\left\{\left(\frac{i}{\sqrt{z}}-\tilde{a}\right)
\left[G^{\mathrm{ren}}_{R,2}(z)-\tilde{d}\right]-\ve^{2}\right\}\left\{\left(\frac{i}{\sqrt{z}}-\tilde{a}\right)
\left[G^{\mathrm{ren}}_{R,1}(z)-\tilde{d}\right]-\ve^{2}\right\}
\\
\Gamma^{\ve}_{1,1}(z) &\!=\!& \left(
G^{\mathrm{ren}}_{R,1}(z)-\tilde{d}\right)
\left\{\left(\frac{i}{\sqrt{z}}-\tilde{a}\right)
\left[G^{\mathrm{ren}}_{R,2}(z)-\tilde{d}\right]-\ve^{2}\right\} \\
\Gamma^{\ve}_{1,3}(z) &\!=\!& \Gamma^{\ve}_{3,1}(z)=\ve
\left\{\left(\frac{i}{\sqrt{z}}-\tilde{a}\right)
\left[G^{\mathrm{ren}}_{R,2}(z)-\tilde{d}\right]-\ve^{2}\right\} \\
\Gamma^{\ve}_{2,2}(z) &\!=\!& \left(
G^{\mathrm{ren}}_{R,2}(z)-\tilde{d}\right)
\left\{\left(\frac{i}{\sqrt{z}}-\tilde{a}\right)
\left[G^{\mathrm{ren}}_{R,1}(z)-\tilde{d}\right]-\ve^{2}\right\} \\
\Gamma^{\ve}_{2,4}(z) &\!=\!& \Gamma^{\ve}_{4,2}(z)=\ve
\left\{\left(\frac{i}{\sqrt{z}}-\tilde{a}\right)
\left[G^{\mathrm{ren}}_{R,2}(z)-\tilde{d}\right]-\ve^{2}\right\} \\
\Gamma^{\ve}_{3,3}(z) &\!=\!&
\left(\frac{i}{\sqrt{z}}-\tilde{a}\right)
\left\{\left(\frac{i}{\sqrt{z}}-\tilde{a}\right)
\left[G^{\mathrm{ren}}_{R,2}(z)-\tilde{d}\right]-\ve^{2}\right\} \\
\Gamma^{\ve}_{4,4}(z) &\!=\!&
\left(\frac{i}{\sqrt{z}}-\tilde{a}\right)
\left\{\left(\frac{i}{\sqrt{z}}-\tilde{a}\right)
\left[G^{\mathrm{ren}}_{R,1}(z)-\tilde{d}\right]-\ve^{2}\right\}
\end{eqnarray*}

\subsection{Scattering}

The analysis of transport on a hybrid surface of mixed dimensionality follows the  scheme described in numerous papers --- see, e.g., \cite{ES87}, \cite{ES94}, \cite{ETV}, \cite{BEG03}, \cite{BG03}, and references therein.

If we have a scattering system in which the difference between the two dynamics is expressible in terms of self-adjoint extensions of a symmetric operator with finite deficiency indices it is possible to write the scattering matrix directly in terms of the Krein operator-valued function $Q(z)$ and the boundary conditions -- cf.\cite{AP} or \cite{BMN}. Using this approach --- we refer specifically to formula (1.5) from \cite{BMN} --- we arrive at
\[ \label{scattering_matrix}
S_{\AA^{\ve}}(z):=
 \left(\begin{array}{cccc}
 \frac{ \left(-\frac{\rmi}{\sqrt{z}}-\tilde{a}\right)\left(G^{\mathrm{ren}}_{R,1}(z)
 -\tilde{d}\right)-\ve^{2}}{ \left(\frac{\rmi}{\sqrt{z}}-\tilde{a}\right)\left(G^{\mathrm{ren}}_{R,1}(z)-\tilde{d}\right)-\ve^{2}} & 0 & 0&0\\
0 & \frac{ \left(-\frac{\rmi}{\sqrt{z}}-\tilde{a}\right)\left(G^{\mathrm{ren}}_{R,2}(z)-\tilde{d}\right)-\ve^{2}}
{ \left(\frac{\rmi}{\sqrt{z}}-\tilde{a}\right)\left(G^{\mathrm{ren}}_{R,2}(z)-\tilde{d}\right)-\ve^{2}} & 0 &0 \\
0 & 0 & 1 & 0
\\
0 & 0 & 0 & 1  \end{array} \right)
 \]
Putting $z=k^{2}$ we can in analogy with \cite{ETV} describe the reflection amplitude of a particle travelling along the halfline, with the ``upper'' spin component only, as follows
\begin{equation}
 \label{refcoef}
\rR(k)= \frac{ \left(-\frac{\rmi}{k}-\tilde{a}\right)\left(G^{\mathrm{ren}}_{R,1}(k^{2})-\tilde{d}\right)-\ve^{2}}
{ \left(\frac{\rmi}{k}-\tilde{a}\right)\left(G^{\mathrm{ren}}_{R,1}(k^{2})-\tilde{d}\right)-\ve^{2}}\,,
 \end{equation}
which looks like the result contained in \cite{ES07}. As briefly mentioned in the said paper, however, the magnetic case differs substantially from the non-magnetic one. The point is that the Green functions $G^{\mathrm{ren}}_{R,j}(k^{2})$ with $j=1,2$, are real-valued, and as a consequence, the scattering on the halfline is unitary, $|\rR(k)|^2=1$, in the magnetic case. From the spectral point of view one can expect formation of resonances due to the perturbation of the discrete spectrum of the spin-orbit Hamiltonian in the plane embedded in the continuous spectrum of the free Hamiltonian on the half-line as it is common in similar cases --- see again \cite{ES94}, \cite{ETV} and references therein.

\subsection{Spectral properties of $H^{\AA^{\ve}}$}

The literature devoted to the effect of weak or local
perturbations on the Landau Hamiltonian is rich. In particular, the recent paper \cite{RT} investigated a weak perturbation of the Landau Hamiltonian by a fast decaying or even compactly supported electric or/and magnetic field; the effect on the Landau levels was generically a splitting. Of course, eigenvalues which split off a Landau level appear in many situations --- recall, e.g., the classical analysis \cite{GHS} of a point interaction in a homogeneous magnetic field, possibly in presence of a potential. However, the result of \cite{RT} was more surprising: the authors discovered that a 2D axially symmetric short-range potential gives rise to an infinite number of the negative-energy levels if one takes into account the spin-orbit interaction.

Spectral properties of $H^{\AA^{\ve}}$ require an accurate
analysis, in particular, because the decoupled system has numerous embedded eigenvalues, and while Weyl's theorem guarantees stability of the essential spectrum, its character might change by the perturbation. For the sake of definiteness we suppose the magnetic field intensity is positive, $b>0$, in the chosen coordinate frame.

\newtheorem*{theo}{Theorem}\begin{theo} Assume that $\tilde{a}>0$; then the spectrum of $H_{\AA^{\ve}}$
looks as follows:
\begin{itemize}
 \setlength{\itemsep}{-4pt}
 \item[(a)] the point spectrum $\sigma_p (H_{\AA^{\ve}})
 =\sigma_p (H_{R})\cup\Sigma$, where $\Sigma$ is a finite set of
 negative eigenvalues,
\item[(b)] the continuous spectrum $\sigma_c
(H_{\AA^{\ve}})=[0,\infty)$,
\item[(c)] there are infinitely many resonances with the real parts in the gaps between the eigenvalues of $H_{R}$ and negative imaginary parts. If the coupling is weak, their distance from the embedded eigenvalues corresponding $\varepsilon=0$ is $\mathcal{O}(\varepsilon^2)$; the perturbation expansion is given by the relations (\ref{weakres}) below.
\end{itemize}
\end{theo}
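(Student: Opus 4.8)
The plan is to read off all of the spectral and resonance data from the poles of the resolvent kernel (\ref{krein}), which are governed by the scalar determinant $D_{\ve}(z)$ together with the poles of the free kernel $G^0$. Because the contact coupling is spin-independent, $D_{\ve}(z)$ factorizes into two channel determinants, so that the eigenvalue/resonance condition in the $j$-th channel reduces to
\[
\Big(\tfrac{\rmi}{\sqrt z}-\tilde a\Big)\big(G^{\mathrm{ren}}_{R,j}(z)-\tilde d\big)=\ve^2,\qquad j=1,2,
\]
to be solved on the appropriate sheet of the Riemann surface of $\sqrt z$. The observation already used in the scattering subsection is decisive here: $G^{\mathrm{ren}}_{R,j}$ is built from digamma functions in (\ref{gg1})--(\ref{g2}), hence it is \emph{meromorphic} with poles exactly at the modified Landau levels and real on the real axis away from them, the only branch cut coming from the lead term $\rmi/\sqrt z$. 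This lets me treat the first sheet (real $z$, genuine eigenvalues) and the second sheet ($\sqrt z\mapsto-\sqrt z$, resonances) through one and the same algebraic equation.

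For part (b) I would first invoke Weyl's theorem: the resolvent difference in (\ref{krein}) is of rank sixteen, hence trace class, so $\sigma_{\mathrm{ess}}(H_{\AA^{\ve}})=\sigma_{\mathrm{ess}}(H^0)=[0,\infty)$, the halfline Laplacian contributing the interval and $H_R$ only eigenvalues; by Kato--Rosenblum the absolutely continuous parts are unitarily equivalent, so $\sigma_{\mathrm{ac}}(H_{\AA^{\ve}})=[0,\infty)$ and therefore $\sigma_c=[0,\infty)$. For part (a) the inclusion $\sigma_p(H_R)\subseteq\sigma_p(H_{\AA^{\ve}})$ follows from a degeneracy argument: each modified Landau level is infinitely degenerate, whereas $L_0,L_1$ are only finitely many functionals on the corresponding eigenspace, so the eigenfunctions they annihilate lie in the domain of the restricted operator and satisfy (\ref{bc}) with vanishing lead component, remaining eigenfunctions of $H_{\AA^{\ve}}$ with infinite multiplicity. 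That no \emph{additional} eigenvalue sits in $(0,\infty)$ follows from the displayed channel equation: for real $z>0$ the factor $\rmi/\sqrt z-\tilde a$ has nonzero imaginary part while $G^{\mathrm{ren}}_{R,j}(z)-\tilde d$ is real, so their product can equal the positive real number $\ve^2$ only if $G^{\mathrm{ren}}_{R,j}(z)=\tilde d$, whence the product is $0\neq\ve^2$, a contradiction; the positive Landau levels are poles, not zeros, of $D_{\ve}$, so the coupled state detached from them is pushed off the axis rather than staying real.

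It remains to show that the negative eigenvalues form a finite set $\Sigma$. Below the essential spectrum the channel equation is purely real, since $\rmi/\sqrt z=|z|^{-1/2}$ for $z<0$, and its non-Landau solutions are precisely $\Sigma$. I would count them by combining three facts: only finitely many modified Landau levels are negative (the levels are positive for large $n$, the linear term dominating the square root), so $G^{\mathrm{ren}}_{R,j}$ has finitely many poles in $(-\infty,0)$; on each resulting subinterval it is real-analytic and monotone between poles; and, using $\psi(w)\sim\log w$, one gets $G^{\mathrm{ren}}_{R,j}(z)\to-\infty$ while $|z|^{-1/2}-\tilde a\to-\tilde a<0$ as $z\to-\infty$, so the left-hand side is eventually monotone and tends to $+\infty$. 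Hence only finitely many crossings of the level $\ve^2$ occur and $\Sigma$ is finite; this is exactly where $\tilde a>0$ enters, fixing the sign of the limit and excluding accumulation at $-\infty$.

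For part (c) I would \emph{define} resonances as the zeros of the channel determinant continued to the second sheet, i.e.\ with $\rmi/\sqrt z$ replaced by $-\rmi/\sqrt z$, the meromorphy of $G^{\mathrm{ren}}_{R,j}$ making this continuation immediate. At $\ve=0$ the parts decouple and the embedded eigenvalues are exactly the solutions of $G^{\mathrm{ren}}_{R,j}(z)=\tilde d$; since $G^{\mathrm{ren}}_{R,j}$ is monotone between consecutive poles, there is precisely one such solution in each gap of $\sigma(H_R)$, which yields the infinitely many resonances with real parts in the gaps. Expanding the second-sheet equation around a solution $w$ via $G^{\mathrm{ren}}_{R,j}(z)-\tilde d\approx(G^{\mathrm{ren}}_{R,j})'(w)(z-w)$ gives
\[
z-w\approx\frac{\ve^2}{\big(-\tfrac{\rmi}{\sqrt w}-\tilde a\big)(G^{\mathrm{ren}}_{R,j})'(w)}+\mathcal{O}(\ve^4),
\]
the displacement of order $\ve^2$ recorded in (\ref{weakres}); since $\im\,(-\tilde a-\rmi/\sqrt w)^{-1}>0$, a direct evaluation using the sign of the monotone derivative $(G^{\mathrm{ren}}_{R,j})'(w)$ produces the negative imaginary part asserted in the statement. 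The hard part will \emph{not} be this formal perturbation step but the rigorous bookkeeping at the Landau levels: one must show that the finite-rank coupling detaches exactly a one-dimensional state per channel from each infinitely degenerate level -- leaving the level itself in $\sigma_p$ while sending the detached state onto the second sheet -- so that the $\ve=0$ embedded eigenvalues and the resonances are in genuine one-to-one correspondence and no spurious real eigenvalue survives in $(0,\infty)$.
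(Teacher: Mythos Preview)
Your proposal is correct in its overall architecture and in most of its details; it follows the same route as the paper---reduce everything to the two scalar channel equations coming from the factorization of $D_{\ve}(z)$, then read off eigenvalues on the first sheet and resonances on the second---but it diverges from the paper in three places, and it is worth seeing what each choice costs and buys.

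\emph{Finiteness of $\Sigma$.} The paper does not analyze the large-$|z|$ behaviour of $G^{\mathrm{ren}}_{R,j}$ at all. It simply observes that $H_{\AA^{\ve}}$ and $H^0$ have a common symmetric restriction with deficiency indices $(4,4)$, so by the general variational bound (Weidmann, \S8.3) the negative spectrum can contain at most sixteen eigenvalues counted with multiplicity. This one-line argument replaces your asymptotic analysis. Your asymptotic route also works, but the monotonicity of $G^{\mathrm{ren}}_{R,j}$ between poles that you invoke is not obvious (the coefficients in (\ref{gg1})--(\ref{g2}) themselves depend on $z$ through $\eta_R$) and is in fact not needed for finiteness: real-analyticity on each subinterval plus the $+\infty$ limit of the left-hand side as $z\to-\infty$ already forces the solution set to be discrete and bounded, hence finite.

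\emph{Persistence of the modified Landau levels.} The paper verifies this by rewriting the $(3,3)$ entry of the full Green function and checking explicitly that the singularity at each Rashba level survives the Krein correction. Your codimension argument---each level is infinitely degenerate while only finitely many boundary functionals must vanish---is cleaner and avoids the computation; just note that for smooth $H_R$-eigenfunctions $L_0$ vanishes automatically, so the only constraint is $L_1(\psi_{\mathrm{plane}})=\psi_{\mathrm{plane}}(0)=0$, a finite-rank condition on an infinite-dimensional eigenspace.

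\emph{Resonance expansion.} The paper organizes the weak-coupling analysis as a fixed-point iteration $z^{(k)}\mapsto z^{(k+1)}$ around $z^{(0)}=E^0_{l,n}+(\rho(E^0_{l,n})\tilde d)^{-1}$ and proves convergence by the contraction estimates $\|z^{(1)}-z^{(0)}\|\le C_1\ve^2$, $\|z^{(k+1)}-z^{(k)}\|\le C_2\ve^2\|z^{(k)}-z^{(k-1)}\|$, which directly yields (\ref{weakres}). Your single-step Taylor expansion around the $\ve=0$ root $w$ of $G^{\mathrm{ren}}_{R,j}(z)=\tilde d$ gives the same leading-order answer and the same $\mathcal{O}(\ve^2)$ displacement; to make it rigorous you would invoke the implicit function theorem in place of the paper's contraction argument, and you should state explicitly that $(G^{\mathrm{ren}}_{R,j})'(w)\ne 0$, which in the paper is encoded in the finiteness of $\rho(E^0_{l,n})$. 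Again the monotonicity you cite to guarantee exactly one root per gap is plausible but not proved; the paper sidesteps this by anchoring the iteration at the explicitly known $\ve=0$ solutions rather than by a counting argument.
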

\begin{proof}
The Weyl theorem guarantees the preservation of the essential
spectrum, \newline  $\sigma_{ess} (H_{\AA^{\ve}})=\sigma_{ess} (H_{0})$. Since both the operators have a common symmetric restriction with deficiency indices $(4,4)$, it follows from general principles \cite[Sec.~8.3]{We} that the negative spectrum of $\AA^{\ve}$ consists of at most sixteen eigenvalues, multiplicity taken into account.
To learn more about the the point spectrum, $\sigma_{p}
(H_{\AA^{\ve}})$, it is necessary to check directly the
singularities of the resolvent (\ref{krein}). We will analyze
explicitly one component, the other ones can be treated similarly.
We rewrite the $(3,3)$ component of the matrix (\ref{krein}) as
\begin{eqnarray}\label{inter}
\lefteqn{(G_\AA(x,x';\bx,\bx';z))_{(3,3)}} \nonumber \\ && =
G^{11}_{R}(\bx,\bx';z)-
\frac{G^{11}_{R}(\bx,0;z)\left(\frac{i}{\sqrt{z}}-\tilde{a}\right)G^{11}_{R}(0,\bx';z)}
{\left(\frac{i}{\sqrt{z}}-\tilde{a}\right)\left[G_{R,1}^{\mathrm{ren}}(z)-\tilde{d}\right]-\varepsilon^2} \nonumber \\
&&=\frac{\left(\frac{i}{\sqrt{z}}-\tilde{a}\right)\left[G_{R,1}^{\mathrm{ren}}(z)G^{11}_{R}(\bx,\bx';z)
-G^{11}_{R}(\bx,\mathbf{0}';z)G^{11}_{R}(\mathbf{0},\bx';z)\right]}{\left(\frac{i}{\sqrt{z}}
-\tilde{a}\right)\left[G_{R,1}^{\mathrm{ren}}(z)-\tilde{d}\right]-\varepsilon^2} \nonumber \\
&&\quad-\frac{\left[\left(\frac{i}{\sqrt{z}}-\tilde{a}\right)\tilde{d}
+\varepsilon^2\right]G^{11}_{R}(\bx,\bx';z)}{\left(\frac{i}{\sqrt{z}}
-\tilde{a}\right)\left[G_{R,1}^{\mathrm{ren}}(z)-\tilde{d}\right]-\varepsilon^2}
\end{eqnarray}
where $G^{11}_{R}(\bx,\bx';z)$ equals
\begin{equation*}
\left( \frac{1}{2}+\frac{\beta_{R}
-\varkappa_{R}}{2\eta_{R}}\right)G_0\big(\bx,\bx';\zeta_{R}^-( b)\big)
+\left( \frac{1}{2}-\frac{\beta_{R}-\varkappa_{R}}{2\eta_{R}}\right)G_0\big(\bx,\bx';\zeta_{R}^+( b)\big)
\end{equation*}
Expanding the digamma function in $G_0\big(\bx,\bx';z\big)$ we get
\begin{equation*}
\begin{split}
&G_0\big(\bx,\bx';z\big)=\frac{1}{4\pi} \Theta(\bx,\bx')\left[\Phi
\Big(\frac{1}{2}-\frac{z}{2|b|}\,,1;\frac{|b|}{2}(\bx-\bx')^2\Big)\,\log\left(\frac{|b|}{2}(\bx-\bx')^{2}\right)\right.\\
&+\left.\sum_{r=0}^{\infty}\frac{\left(\frac{1}{2}-\frac{z}{2\,b}\right)_{r}}{r!}\left(-4\pi\,U(z-2|b|r)
-\log\frac{|b|}{2}+2\Psi(1)-2\Psi(1+r)\right)x^{r}\right]
\end{split}
\end{equation*}
where
\[\nonumber\Theta(\bx,\bx'):=\exp\Big(\frac{\rmi
b}{2}(\bx\wedge\bx')-\frac{|b|}{4}(\bx-\bx')^2\Big)\]
and
\[\nonumber\Phi(a,c,x)
=\sum_{r=0}^{\infty}\frac{(a)_{r}}{(c)_{r}r!}x^{r} \quad
\mathrm{with} \quad (z)_{r} =\frac{\Gamma(z+r)}{\Gamma(z)}\,.\]
It is easy to check that the first term of the last expression at the \emph{rhs} of (\ref{inter}) is still singular for
$\zeta_{R}^\pm( b)=n$ and these singularities are the same as those of the decoupled Hamiltonian. In a similar way we can treat the other components of the resolvent (\ref{krein}) concluding that the Rashba energy levels remain to be embedded eigenvalues of infinite multiplicity in the coupled case. Using this fact and the above mentioned preservation of the essential spectrum it is possible to determine  the continuous part of the spectrum,
\begin{equation*}
\sigma_{c} (H_{\AA^{\ve}})=\sigma_{c} (H_{0})
\end{equation*}

The resolvent (\ref{krein}) can have naturally other singularities coming from zeroes of $D_{\ve}(z)$. In the decoupled case, $\varepsilon=0$, the point interaction on the halfline has one bound state with the negative energy $E_{0}=
-\frac{1}{\tilde{a}^2}$, since we have assumed $\tilde{a}>0$. Let us analyze the effect of the coupling, $\varepsilon\neq 0$, observing the component $(j,j)$ of (\ref{krein}) with $j=1,2$,
\begin{eqnarray*}
\left(G_{\AA^{\ve}}(x,x';\bx,\bx';z)\right)_{(j,j)}
&\!=\!&\frac{\rmi}{\sqrt{z}}\, \cos\left( \sqrt{z}x_<\right)
\re^{-\rmi\sqrt{z}x_>} \\ &&
+\left(\frac{\rmi}{\sqrt{z}}\right)^2\frac{\re^{-\rmi\sqrt{z}x}\,
\re^{-\rmi\sqrt{z}x'}}{\frac{\rmi}{\sqrt{z}}
-\tilde{a}-\frac{\varepsilon^2}{G_{R,j}^{\mathrm{ren}}(z)-\tilde{d}}}
\end{eqnarray*}
The only singularity in the vicinity of $E_{0}$  is obtained from the equation
\begin{equation*}
\left(\frac{i}{\sqrt{z}}-\tilde{a}\right) \left[G^{\mathrm{ren}}_{R,j}(z)-\tilde{d}\right]-\ve^{2}=0\quad j=1,2
\end{equation*}
The Rashba eigenvalues of the unperturbed Hamiltonian,
corresponding to $\varepsilon=0$, $\tilde{d}=0$, are all real,
positive and infinitely degenerate, and since the perturbation is of finite rank in the resolvent sense, continuity of the
singularities w.r.t. $\varepsilon$ is guaranteed.

For a start let us discuss qualitatively the solutions of the equation $D_{\ve}(z)=0$ around Rashba eigenvalues. For the
decoupled case, $\varepsilon=0$, and $\tilde{d}\neq 0$ we a point interaction in the plane with spin-orbit interaction and the effect of this perturbation is analogous to that of \cite{GHS}. Suppose that $\tilde{d}$ is sufficiently large to keep all the eigenvalues real and positive  --- we recall that the lowest one goes to $-\infty$ as $\tilde{d}\to -\infty$ --- which means in the language of two-dimensional point interactions that the perturbation is sufficiently weak. Switching then the coupling on, $\ve\ne0$, the condition $D_{\ve}(z)=0$ with the solution $z=\lambda\in\mathbb{R}_+$ can be then rewritten as
\[\label{qual}
\sqrt{\lambda}=i\frac{G^{\mathrm{ren}}_{R,j}(\lambda)-\tilde{d}}
{\tilde{a}\left[G^{\mathrm{ren}}_{R,j}(\lambda)-\tilde{d}\right]+\ve^{2}}\,,\quad
j=1,2
\]
and we notice that this equation cannot be solved by any real
positive $\lambda$ because on the left-hand side we have always a real positive number while the right-hand side is always purely imaginary. This is due to the fact that the function
$G^{\mathrm{ren}}_{R,j}(\lambda)$ takes real values for every $\lambda>0$ and $j=1,2$. If there is a solution to this equation it must be therefore outside the real axis.

On the other hand, for a sufficiently strong point interaction in the couple case, $\varepsilon=0$, in other words for $\tilde{d}$ sufficiently large negative there may be negative eigenvalues. The equation for singularities in this case  can be written as
\begin{equation}
{G^{\mathrm{ren}}_{R,j}(z)}={\tilde{d}}
\end{equation}
From the properties of the digamma function we see that the
function $G^{\mathrm{ren}}_{R,j}(z)$ is \newline monotonously decreasing around the origin, it is positive in a neigborhood when it approaches the first Rashba level and has a zero on the negative halfline. Consequently, for $\tilde{d}$ sufficiently large negative there is a negative solution. Switching now again the coupling in, $\ve\ne 0$, an taking $z=-\lambda$ with $\lambda>0$ we see from (\ref{qual}) that such a solution is still a negative eigenvalue, at least for $|\ve|$ small enough.

Let us now look more closely how the singularities of the
resolvent behave in the weak coupling regime. We are going to find the corresponding series expansion in $\varepsilon$ using a recursive procedure. To begin with we rewrite the equation
$D_{\ve}(z)=0$ as
\begin{equation}
f(z)=\frac{1}{G^{\mathrm{ren}}_{R,j}(z)}
\label{geneve}
\end{equation}
where we have introduced
\begin{equation}\label{pos}
f(z):=\frac{1}{\tilde{d}-\frac{\ve^{2}\sqrt{z}}{\tilde{a}\sqrt{z}-i}}
=\frac{(\tilde{a}\tilde{d}-\ve^{2})\tilde{a}\sqrt{z}+\tilde{d}}{(\tilde{a}\tilde{d}
-\ve^{2})^{2}z+\tilde{d}^{2}}+\frac{i\ve^{2}\sqrt{z}}{(\tilde{a}\tilde{d}-\ve^{2})^{2}z+\tilde{d}^{2}}
\end{equation}
To simplify the recursive procedure we adopt the following
notation,
\begin{equation}
\label{rashbalevel}
\left.\begin{array}{c}E^{0}_{1,n}=2nb-2\varkappa_r\sqrt{\beta^{2}_{R}+2nb}
\\ [.3em] E^{0}_{2,n}=2nb+2\varkappa_r\sqrt{\beta^{2}_{R}+2nb}
\\ [.3em] E^{0}_{3,n} =2(n+1)b-2\varkappa_r\sqrt{\beta^{2}_{R}+2(n+1)b}
\\ [.3em] E^{0}_{4,n}=2(n+1)b+2\varkappa_r\sqrt{\beta^{2}_{R}+2(n+1)b}\end{array}\right.
\end{equation}
and we will look for the fixed points of
\begin{equation}\label{rec}
\frac{1}{G^{\mathrm{ren}}_{R,j}(z^{(k)})}\simeq
\rho(E^0_{l,n})(z^{(k)}-z^{(0)})=f(z^{(k-1)})\,,\quad
l=1,2,3,4,\,j=1,2\,,
\end{equation}
where
\begin{equation}
\rho(E_{l,n}^{0})=\left.\frac{\partial}{\partial
z}\left(\frac{1}{G_{R,1}^{\mathrm{ren}}(z)}\right)\right|_{z=E_{l,n}^{0}}\,,\quad
l=1,2\,,
\end{equation}
\begin{equation}
\rho(E_{l,n}^{0})=\left.\frac{\partial}{\partial
z}\left(\frac{1}{G_{R,2}^{\mathrm{ren}}(z)}\right)\right|_{z=E_{l,n}^{0}}\,,\quad
l=3,4\,,
\end{equation}
\begin{equation}\label{deriv}
\begin{split}\rho(E_{2,n}^{0})= \rho(E_{3,n-1}^{0})
=-\frac{8\pi}{b}\frac{\sqrt{2nb+\beta^{2}_{R}}}{\beta_{R}^{2}+\sqrt{2nb+\beta_{R}^{2}}}\,,
\\\rho(E_{1,n}^{0})= \rho(E_{4,n-1}^{0})=-\frac{8\pi}{b}\frac{\sqrt{2nb+\beta^{2}_{R}}}{\beta_{R}^{2}-\sqrt{2nb+\beta_{R}^{2}}} \,. \end{split}
\end{equation}
In this way we can determine the position of the singularities in the leading order in $\ve$. Starting with $z^{(0)}=E_{l,1}^{0}+\frac{1}{\rho(E_{l,n}^{0})}
\frac{1}{\tilde{d}}$
and $l=1,3$ we have two negative real solutions if
$\tilde{d}\leqslant\min\{C_{1},C_{2}\}$ with
$C_{1}=(E_{1,1}^{0}\rho(E_{1,1}^{0}))^{-1}$ and
$C_{2}=(E_{3,1}^{0}\rho(E_{3,1}^{0}))^{-1}$.

On the other hand, for $\tilde{d}$ sufficiently large the starting point of the recursive procedure will be a positive eigenvalue $z^{(0)}=E_{l,n}^{0}+\frac{1}{\rho(E_{l,n}^{0})}
\frac{1}{\tilde{d}}$ and the fixed points of the recursive
procedure, from (\ref{rec}) and (\ref{deriv}), will have a
positive real part and a negative imaginary part; in leading order in $\varepsilon$ the real and imaginary parts of the resonances are
\begin{equation}
\label{weakres}
\begin{split}
&\mathcal{R} (E_{l,n}^{res})=E_{l,n}^{0}+\frac{1}{\rho(E_{l,n}^{0})}\left(\frac{1}{\tilde{d}}+\frac{\tilde{a}E_{l,n}^{0}}{\tilde{d}^{2}(1+\tilde{a}^{2}E_{l,n}^{0})}\varepsilon^{2}\right)+\mathcal{O}(\varepsilon^{3})\\
&\mathcal{I}
(E_{l,n}^{res})=\frac{1}{\rho(E_{l,n}^{0})}\frac{\sqrt{E_{l,n}^{0}}\,\varepsilon^{2}}{\tilde{d}^{2}(1+\tilde{a}^{2}E_{l,n}^{0})}+\mathcal{O}(\varepsilon^{3})
\end{split}
\end{equation}
\noindent From the first of (\ref{weakres}), putting $\ve=0$, it is obtained the displacement of the Rashba energy levels due to the point interaction in the plane. In the coupling case, for $\ve\neq 0$ the energy eigenvalue will migrate in a resonance with negative imaginary part.

The convergence of the sequence in a ball of radius $\ve^2$
centered at $z^{(0)}=E_{l,n}^{0} +\frac{1}{\rho(E_{l,n}^{0})}
\frac{1}{\tilde{d}}$ can be proved using the following estimates,
$$
\|z^{(1)}-z^{(0)}\|=\frac{1}{\rho(E_{l,n}^{0})}\|f(z^{(0)})\|<
C_1 \varepsilon^2\,,
$$
$$
\|z^{(k+1)}-z^{(k)}\|\leqslant C_2\, \varepsilon^2
\|z^{(k)}-z^{(k-1)}\|\,,
$$
where the constants $C_j$ are independent of $\varepsilon$.
\end{proof}

\begin{remark}
{\rm The assumption $\tilde{a}>0$ was used to ensure that there are proper eigenvalues in the decoupled case on the halfline. In a similar way one can treat the case $\tilde{a}<0$ where for $\varepsilon=0$ there is a singularity for the decoupled resolvent but it is on the second Riemann sheet, i.e. an antibound state. One can check that for $\varepsilon\neq 0$ this point will remain on the real negative axis on the second Riemann sheet.}
\end{remark}

\medskip

In order to illustrate the above conclusions, one can analyze in an example the effect of the coupling on the positive eigenvalues coming from the Rashba energy levels in the plane. For instance, let us consider the following eigenvalue in the decoupled case:
\[E_{1,2}^{0}=E_{3,1}^{0}=2 b -2\varkappa_{R}\sqrt{\beta^{2}_{R}+2 b}\]
We fix the value of $\tilde{d}=1$ and study numerically the effect of the coupling $\ve\neq 0$. In the following pictures the real and the imaginary parts of the resonances arising from the same Rashba eigenvalue $E_{1,2}^{0}=E_{3,1}^{0}$ are shown.

\begin{figure}[H]\centering
\includegraphics[width=10 cm]{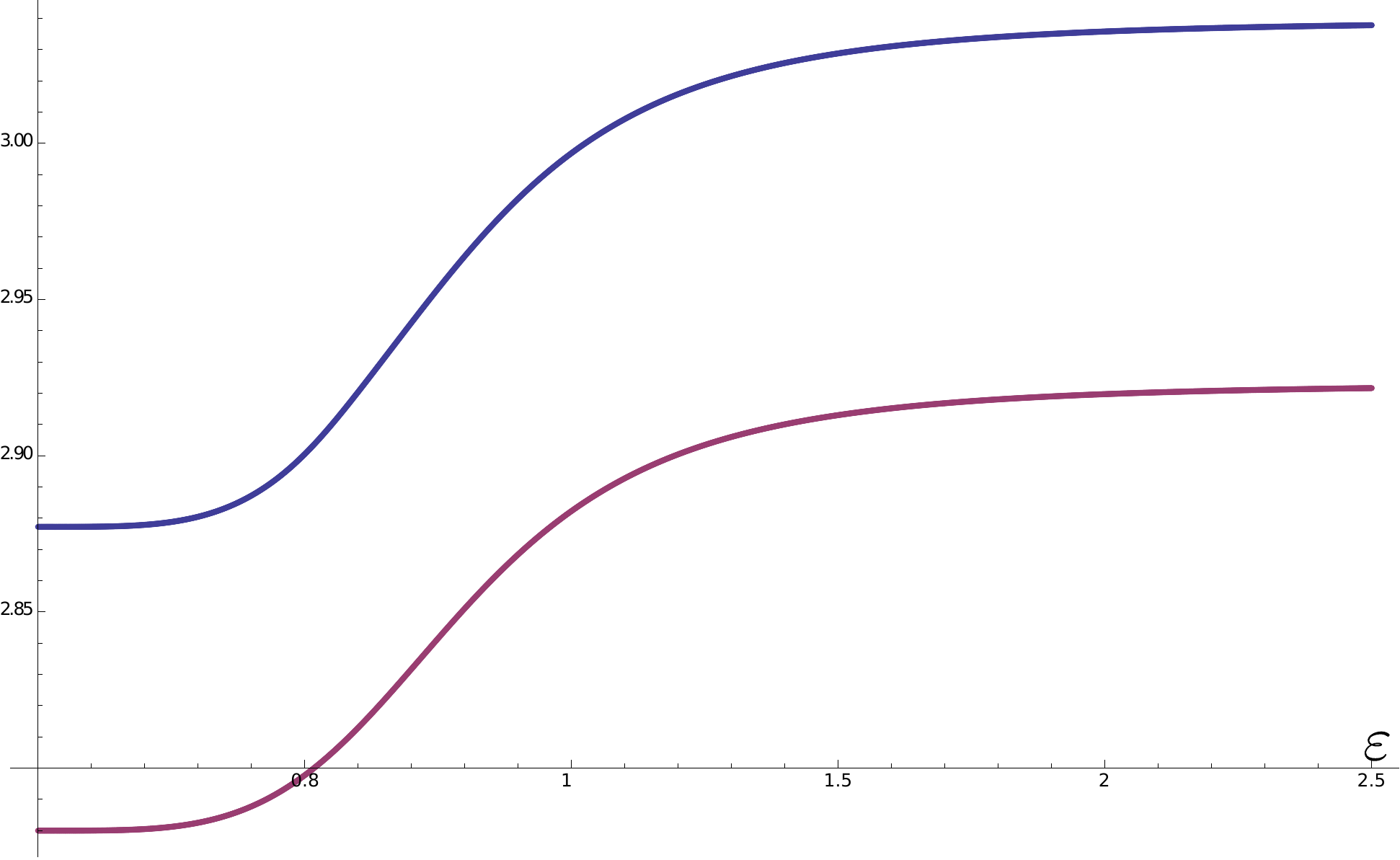}\label{fig6}
\caption{In blue the real part of the resonance is plotted as a
function of $\varepsilon$, coming from $E_{1,2}^{0}$ (color
online). In red (the lower curve) the real part of the resonance
coming from $E_{3,1}^{0}$ with $\tilde{d}=1$, $\tilde{b}=1$,
$\tilde{a}=0$ and $\chi/b=0.1$ is shown.}
\end{figure}
\begin{figure}[H]\centering
\includegraphics[width=10 cm]{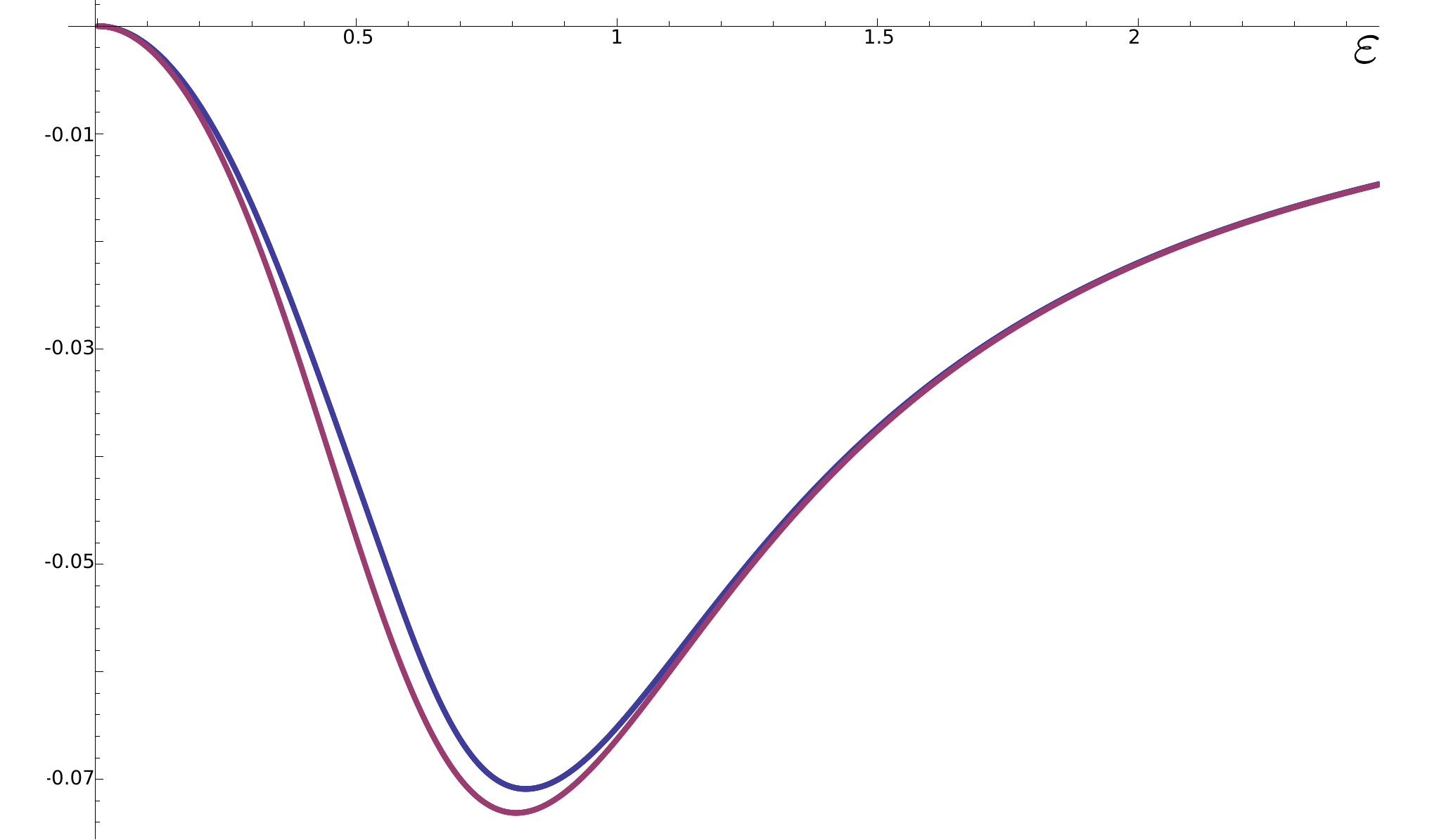}\label{fig7}
\caption{The imaginary part of the same resonances as above.}
\end{figure}

As it often happens in such resonance problems --- see, e.g.,
 \cite{ES94} --- for large values of the coupling constant $\ve$ the resonance can approach an eigenvalue again. In
our case it is clear that \ref{bc} reduces in the limit
$\ve\rightarrow\infty$ to Dirichlet boundary conditions.

\section*{Acknowledgments}
The authors are grateful to the referees for suggestions which helped to improve the manuscript. The research was supported by the Czech Ministry of Education, Youth and Sports within the project LC06002.

\newpage

\end{document}